\newtheorem{lem}{lemma}
\begin{document}
%
% paper title
% can use linebreaks \\ within to get better formatting as desired
% Do not put math or special symbols in the title.
\title{Caching Improvement Using Adaptive User Clustering  \thanks{The research of M. Assaad has been partially funded by Huawei, France and "Fondation Supelec"}}

% author names and affiliations
% use a multiple column layout for up to three different
% affiliations
\author{\IEEEauthorblockN{Salah Eddine Hajri and Mohamad Assaad }
\IEEEauthorblockA{Laboratoire des Signaux et Systemes (L2S, CNRS), CentraleSupelec,
Gif-sur-Yvette, France\\
Email: \{Salaheddine.hajri,\; Mohamad.Assaad\}@centralesupelec.fr}

}

% conference papers do not typically use \thanks and this command
% is locked out in conference mode. If really needed, such as for
% the acknowledgment of grants, issue a \IEEEoverridecommandlockouts
% after \documentclass

% for over three affiliations, or if they all won't fit within the width
% of the page, use this alternative format:
% 
%\author{\IEEEauthorblockN{Michael Shell\IEEEauthorrefmark{1},
%Homer Simpson\IEEEauthorrefmark{2},
%James Kirk\IEEEauthorrefmark{3}, 
%Montgomery Scott\IEEEauthorrefmark{3} and
%Eldon Tyrell\IEEEauthorrefmark{4}}
%\IEEEauthorblockA{\IEEEauthorrefmark{1}School of Electrical and Computer Engineering\\
%Georgia Institute of Technology,
%Atlanta, Georgia 30332--0250\\ Email: see http://www.michaelshell.org/contact.html}
%\IEEEauthorblockA{\IEEEauthorrefmark{2}Twentieth Century Fox, Springfield, USA\\
%Email: homer@thesimpsons.com}
%\IEEEauthorblockA{\IEEEauthorrefmark{3}Starfleet Academy, San Francisco, California 96678-2391\\
%Telephone: (800) 555--1212, Fax: (888) 555--1212}
%\IEEEauthorblockA{\IEEEauthorrefmark{4}Tyrell Inc., 123 Replicant Street, Los Angeles, California 90210--4321}}

 \IEEEoverridecommandlockouts
 \IEEEpubid{\makebox[\columnwidth]{978-1-5090-1749-2/16/\$31.00 \copyright 2016 IEEE
 	 \hfill} \hspace{\columnsep}\makebox[\columnwidth]{ }}

% use for special paper notices
%\IEEEspecialpapernotice{(Invited Paper)}

% make the title area
\maketitle

% As a general rule, do not put math, special symbols or citations
% in the abstract
\begin{abstract}
	
	In this  article  we  explore one of the most promising technologies for 5G wireless networks using an underlay small cell network, namely proactive caching. Using the increase in storage technologies and through studying the users behavior, peak traffic can be reduced through proactive caching of the content  that is most probable to be requested. We propose a new method, in which, instead of caching the most popular content,  the users within the network are clustered according to their content popularity and the caching is done accordingly. We present also a method for estimating the number of  clusters within the network based on the \emph{Akaike information criterion}. We analytically derive a closed form expression of the hit probability and we propose an optimization problem in which  the  small base stations association with  clusters is optimized.
\end{abstract}

% no keywords

% For peer review papers, you can put extra information on the cover
% page as needed:
% \ifCLASSOPTIONpeerreview
% \begin{center} \bfseries EDICS Category: 3-BBND \end{center}
% \fi
%
% For peerreview papers, this IEEEtran command inserts a page break and
% creates the second title. It will be ignored for other modes.
\IEEEpeerreviewmaketitle

\section{Introduction}
With the exponential growth in traffic size over the wireless networks, mobile operators need to adapt their networks in order to be able to cope with the exploding demand. New generation networks will include a number of new technologies in order to enhance the spectral efficiency, such as Massive MIMO, heterogeneous networks and D2D communication. Small cells technology was  proposed as a mean to offload an important amount of data from the macro BS. However, this technology is not without shortcomings, since the deployment of small base stations require a very high back-haul capacity which is not cost efficient. Information-centric networks are also emerging as a promising technology in order to adapt to the new spectral efficiency requirements. Predicting users’ behavior and proactively caching popular content in the edge of the network show an important potential  in terms of backhaul savings and user experience improvement. The idea of caching in mobile cellular networks is getting more and more attention\cite{informationcentric} \cite{MATHA}. In \cite{algo}, inner and outer bounds were derived for the joint routing and caching problem in small cell networks. In \cite{limits} an information theoretic approach for the problem of distributed caching was investigated. In \cite{code} coded caching was investigated for content delivery networks with hierarchical caching. In \cite{fund} the perfomrance of caching in Device-to-Device networks was studied.
In \cite{dd}, the conflict between collaboration distance and interference was derived between D2D users to maximize frequency reuse by taking advantage of distributed content caching. In \cite{unk}, the importance of the users' popularity profile estimation was investigated together with  a transfer learning approach which was proposed in order to enhance the preference of profile estimation.
 A clustering approach for content caching was proposed in \cite{cont} in order to reduce the service delay and shown to perform better than the unclustered and random caching approach. In order to properly adapt to users' behavior, we also propose a clustering based caching. While in  \cite{cont} a  spectral clustering algorithm is used, in this paper, prior to user grouping, we use  the Akaike Information Criterion  in order to efficiently estimate the number of user clusters in the sense of file popularity. This mitigate the need for complex eigenvalue-based algorithms, which can be computationally prohibitive. While in \cite{cont} each cluster is associated to a different SBs where the caching learning procedure is done locally, we propose a scheme in which the clustering is performed in a global and adaptive way since the cluster  number is estimated periodically. We then proceed by optimizing the SBs fractions associated with each cluster in order to  further optimize the system performances.

\section{ Contribution and outcomes}
The contributions of our work are summarized as follows:
\begin{enumerate}
\item An adaptive clustering approach of caching: In this paper we formulate the caching problem in a scenario where stochastically distributed small base stations  cache the content for users with different file preference. Given the different users profiles, we propose a clustering scheme in which users are grouped according to their preference and we develop a clustering scheme that uses this heterogeneousness  in order to effectively caches the files so that the hit probability is maximized.
\item User clusters estimation: Since the network doesn't have an a priori information about the number of user groups, we estimate this parameter  using \emph{Akaike Information Criterion} in order to  effectively estimate  the most popular files of each group. This will enable the adaptive nature of the approach where the system will estimate periodically the number of behavior clusters in order to adapt to new users.
\item Hit probability optimization: After user clustering, the small base station will cache a portion of the most popular files of user groups. We develop an optimization problem in which we define the optimal fractions of the small base stations associated with each cluster.
 \end{enumerate}
The paper is organized as follows: We describe the system model in Section III and the hit probability  in Section IV.  User cluster estimation will be investigated in Section V then the optimization of the hit probability  in Section VI. Finally, in Section  VII numerical results are presented. 
\section{ SYSTEM MODEL AND PRELIMINARIES}
We consider a  cellular network consisting of SBSs, which are modeled
according to a Poisson point process (PPP) $ \phi_s$ with density $\lambda_s$.
The users locations are  modeled also as a Poisson point process $ \phi$ with density $\lambda$. 
We suppose that every small base station has a limited storage  capacity $M$ files. We suppose a file catalog $C=\left\lbrace1...F \right\rbrace $ containing $F$ files each one with length $L$.
We suppose that the users within the network have a heterogeneous file popularity distribution,each user $u$ having popularity vector $P_u=\left[ p_{1u}...p_{Fu}\right] $. We suppose also that the users can be grouped according to there interest into $N_c$ clusters. Each group requesting more a certain group of files.  Due to the limited caching capability of the small base station, we suppose that the SBS will cache the most $M$ popular files of a given cluster. For each cluster $i, i=1..N_c$ a fraction $x_i$ of the available SBs will be caching the corresponding $M$ popular files. The SBs fractions  $\left[x_1...x_{N_c} \right] $ are optimized to improve the system performances.  We suppose that a mobile user will be associated with the nearest small base station that caches the requested content and is located within a radius $R$ from the user terminal. $R$ is equivalent to the maximum range over which the requested content can be served. If the requested file of a given user is available in the cache of a SBS within a distance $R$  a cache hit event occurs. The performance of our system depends on different parameters such as $\lambda_s$,$R$,the number of the clusters $N_c$, the cache size and the correlation of users behaviors.

	\begin{figure}[h!]
		\centering
		\includegraphics[width=7cm,height=2.7cm]{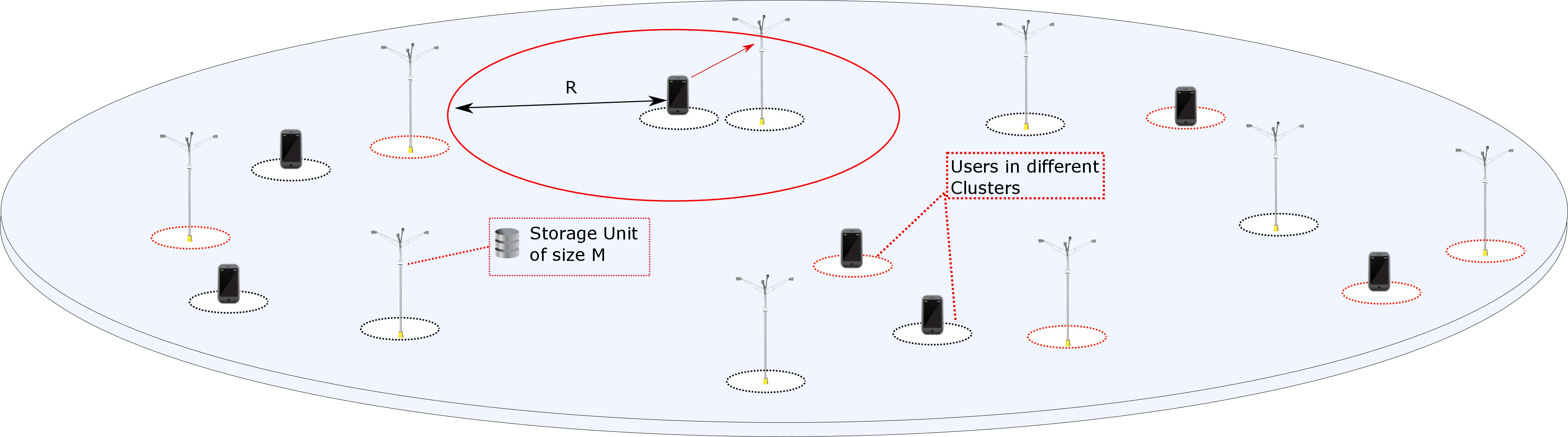}
		\label{System Model}
		\caption{ Semi-orthogonal User Clustering }
	\end{figure}
Increasing the small cell density would, obviously,  increase the probability of finding the requested file but it will cause  more energy consumption and increased interference.  
	  \section{Cache hit Probability }
	  In this paper,  we propose clustering the users according to their file demand. In order to assess the performance of our setting, we investigate the cache hit probability.
	  It is given by the probability of finding the requested file in the cache of a SBS within a radius $R$ from a given user \cite{modeling}. Our context differs from the one in \cite{modeling} since the users are clustered and each SBS caches a fraction of the popular files of a given cluster. Building on the result from \cite{modeling} and taking into consideration our model, the cache hit probability can be expressed as follows (the derivations are skipped for brevity):
	  \begin{align}
	  \mathbb P \left\lbrace cache hit\right\rbrace  = \frac{1}{N_u} \sum_{k=1}^{N_c} \sum_{u=1}^{N_u} \left( \sum_{i \in \Delta_k}^{} p_{iu} \right) \left( 1- e^{- x_k \lambda_s\pi R^2}\right) 
	  \end{align}
	  Where $N_c$ represent the number of user clusters, $\Delta_k$ represent the most $M$ popular files of cluster $k$. Since users belonging to the same cluster are not, necessarily, geographically co located, the SBs should be associated with the different clusters so that the most probable files to be requested are cached close to the users. The fractions $x_i, i=1.. N_c$ will ensure that the SBs are associated with the different clusters in a proportional way to the weight of the users belonging to these clusters. This equation ensures the probability of existence of at least one SBS storing the requested files within a radius $R$ from the mobile user. Each cluster is associated with a file popularity vector resulting from averaging the file popularities of the grouped users.  Once all the users are allocated to their respective clusters, the SBSs will be divided into $N_c$ groups. Each SBS will cache the most popular files of a given cluster. The number of SBS caching the $M$ most  popular files of cluster $k$ is given by $x_k \lambda_s\pi R^2$. The number of user clusters is unknown and should be estimated in order to have an efficient clustering and to adapt to any change in user behavior.  
\section{Cluster Estimation: Akaike information criterion}
\subsection{ Akaike information criterion}
While  users have normally different content popularity, some correlation in user requests may occur for people from the same social group for example.  In order to  efficiently cluster users according to content interest we need to estimate the number of clusters which is not an a priori information. Doing this  estimation periodically enables the system to  cope with terminal mobility or any change in user behavior. The aim is to group users so that the divergence in their request pattern is minimized. This means that users in the same cluster are most likely to  request the same content. In order to estimate the number of clusters while minimizing the variance of user profile in each group,  we use the \emph{AKAIKE INFORMATION CRITERION} (AIC)\cite{akaike}, which is a statistical model selection criterion. Given a collection of models for a set of data, AIC estimates the quality of each model, relative to each other.  AIC is based on \emph{KULLBACK-LEIBLER} information: it offers a relative estimate of the information lost when a given model is used to represent the process that generates the data.
  In doing so, it deals with the trade-off between the goodness of fit of the model based on maximum likelihood estimation and the complexity of the model given by the number of the estimable parameters. We consider a set of statistical models representing the process  generating the popularity vectors of the files for each user: $\varXi=\left\lbrace \xi_{{N_c}_{min}} ...\xi_{{N_c}_{max}}\right\rbrace $ where $\left\lbrace {N_c}_{min}...{N_c}_{max}\right\rbrace $ represent the range from which the number of clusters will be chosen and $\xi_{i}$ is the statistical model of the observed popularity vectors when assuming that the users are grouped into $i$ clusters. Each model is characterized by a set of finite parameters representing the variance of the popularity vectors and  the mean popularity vector in each cluster. The AIC of each model is given by: $ \mathrm{AIC}(\xi_{i}) = 2k_i - 2\ln(L_{\xi_{i}})$, $k_i$ being the number of estimable parameters in model $\xi_{i}$  and $L_{\xi_{i}}$ being the likelihood of the data according to the model $\xi_{i}$. In our case, each model is  characterized by 
 $i \times (F+1)$ estimable parameters,  $ i \times F $ centroid coordinates representing the average file popularity in each cluster  and $i$ variance estimate representing the divergence in user behavior in each cluster.
  The AIC can be understood thanks to  the \emph{KULLBACK-LEIBLER} information, also refereed to as discrepancy, which measures the information loss between two models. In fact, if we consider a generating model for the users popularity vectors characterized by the true number of user clusters $N_c$ and an approximation  model $\xi_{i}$ characterized by $N_i$,  the discrepancy between the two models is given by\cite{aic}:$$  d\left(N_i,N_c \right) =  \mathbb E\left\lbrace -2 \;log \; L(N_i |P )\right\rbrace  $$ 
where $L(N_i |P )$ represent the likelihood of the candidate model knowing the users popularity vectors. Evaluating $ d\left(N_i,N_c \right)$ is not possible since it requires the knowledge of the value of $N_c $. Akaike, however found that $-2 \;log \; L(N_i |P )$ is a biased estimator of the discrepancy measure and that after bias adjustment we have the following asymptotic result:
$$ \mathbb E\left\lbrace d\left(N_i,N_c \right)\right\rbrace \approx \mathbb E\left\lbrace 2k_i - 2 log(L_{\xi_{i}}) \right\rbrace $$ 
Note that this approximation becomes more accurate when increasing the number of the statistical process observations used in the likelihood function. Meaning that AIC offers an estimation of the expected cross entropy or discrepancy between the real model and the estimated one. In our case, we will compute the likelihood $L_{\xi_{i}}$   according to the Gaussian Mixture model, which is one of the most used assumptions when dealing with data clustering.  We model each cluster in $\xi_{i} $
 as a multi-variate Gaussian distribution, then  $log(L_{\xi_{i}})$ is given by \cite{xmean}:
	\begin{align}
& log\left( L_{\xi_{i}} \right)  = log\left(\prod_{u=1}^{N_u} \mathbb P(P_u) \right) \\
	\nonumber&= \sum_{u=1}^{N_u} \left( log(\frac{1}{\sqrt{2\pi} \hat{\sigma}_{\phi(u)}^F})-\frac{\lVert P_u - \hat{P}_{\phi(u)}  \rVert^2}{2\hat{\sigma}_{\phi(u)}^2} + log(\frac{N_{\phi(u)} }{N_u}) \right) 
	\end{align}
where $\phi(u)$ represent the cluster to which user $u$ is assigned.
The variance estimate in each cluster $k$ is given by: $$\hat{\sigma}_k^2= \frac{1}{(N_k)} \sum_{u \in k}^{} \lVert P_u - \hat{P}_{k}  \rVert^2 $$
 Then the log-likelihood of users popularity vectors is:
\vspace*{-2mm}
	\begin{align}
	log\left( L_{\xi_{i}} \right)  &=\sum_{k=1}^{i} 
	 -\frac{N_k}{2}( log(2 \pi) -1 + 2 log( \frac{N_k}{N_u})- F  log(\hat{\sigma}_k^2) )
	\end{align}
The AIC selects the model that minimizes discrepancy:
$$\xi_{AIC}=\underset{\xi}{\text{argmin}} \; AIC(\xi)$$ 	
 We use this criterion in user clustering according to popularity profile in order to derive the best user grouping while minimizing the loss in information regarding user file demand. It also allow for  a dynamic system that can adapt to any change in user behavior. 
\subsection{Proposed Clustering Algorithm}	
The proposed algorithm starts by clustering the users while assuming the existence of ${N_c}_{min}$ groups of users and then adds centroids according to some criterion until an upper bound ${N_c}_{max}$ is reached. If the AIC is decreasing over all the interval $\left[{N_c}_{min}...{N_c}_{max}  \right] $, the search should be extended until reaching  a minimum. In order to have an efficient clustering,  we need to derive a criterion to specify where the new centroids should be added. In fact clusters are represented by their centroid coordinates which specify the average popularity of each file for the users of the cluster.  It is obvious that in order to have good estimate of this popularity within the clusters, one would prefer to have minimum variance of file popularity. 

 We propose a simple criterion in which, at each step of the algorithm, a new centroid will be added in the cluster having the largest average distance between its centroid popularity vector and that of the users within. The new center will be selected as the node having the largest distance from its cluster centroid. This is due to the fact that large distance means that this user have a very different behavior which means that it has less probability of requesting the same files as the rest  of the users in its cluster. Once the new centroid is selected,  the algorithm will affect a user to the cluster for which the correlation between its  popularity vector and that of the centroid is maximized.  This will enable to minimize the disparity between users behaviors in the same cluster. At each iteration the popularity profile of the cluster is computed as the center-of-mass of all users popularity profiles:$\hat{P}_{k} = \frac{\sum_{u ,\phi(u)=k }^{} P_u  }{N_k}$. The proposed clustering algorithm is the following:
	  \begin{center}
	  	\begin{tabular}{ l  }
	  		User Clustering Algorithm\\
	  		\hline
	  			\hline
	  		\emph{Initialize}: Cluster number interval $\left[{N_c}_{min}...{N_c}_{max} \right]$\\
	  		                   choose randomly the first ${N_c}_{min}$ centroids from the users\\

	  		$1. $ Run ${N_{c_{min}}}-mean$ algorithm and compute $\mathrm{AIC}(\xi_{{N_c}_{min}})$\\
	  		$2. $ Choose the user having the greatest distance from its\\ cluster centroide in the cluster with the greatest variance add\\ a new centroide with popularity profile of the choosen user\\
	  		$3. $ Run steps $1$ and $2$ until reaching ${N_c}_{max} $\\
	  		$4. $ Choose the model which minimizes the AIC \\ 
	  		and cluster the users accordingly\\ 	
	  		\hline
	  		\hline
	  	\end{tabular}
	  \end{center}	 
\section{Cache Hit Probability Optimization}
 Once the users are clustered according to their traffic preference, we need to find the optimal value of the fractions $\left[x_1...x_{N_c} \right] $ that maximizes the cache hit probability. Each $x_i, i=1..N_c$ referring to the fraction of SBs associated with cluster $i$. To this end, we define the following optimization problem in which the cache hit probability is optimized over the fraction vector $X=\left[x_1...x_{N_c} \right]$:
\begin{equation*}
\begin{aligned}
&   \underset{X}{\text{maximize}}
& &  \frac{1}{N_u} \sum_{k=1}^{N_c} \sum_{u=1}^{N_u} \left( \sum_{i \in \Delta_k}^{} p_{iu} \right) \left( 1- e^{- x_k \lambda_s\pi R^2}\right) \\
& \text{subject to}
& & \sum_{k=1}^{N_c} x_k \leq 1
\end{aligned}
\end{equation*}
This optimization will allow to allocate more SBs to the most dominant Cluster while maintaining an adequate cached file diversity in the network. It turns out that this optimization problem is a concave problem which optimal solution can be derived using the KKT conditions.
\begin{lem}  		
The optimal fraction values are then given by:
\begin{align}
& x_s = \frac{N_c  log(\psi_s)-\sum_{k=1}^{N_c} log(\psi_k) +\lambda_s\pi R^2}{N_c\lambda_s\pi R^2} ,\; \forall  s=1..N_c\\ 
&\nonumber \text{where} \psi_s= \lambda_s\pi R^2\sum_{u=1}^{N_u} \left( \sum_{i \in \Delta_s}^{} p_{iu} \right) \\\nonumber
\end{align}
	
\end{lem}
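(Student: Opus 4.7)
The plan is to solve the constrained maximization via KKT conditions, as the paper hints. I would first verify that the problem is indeed concave: the objective is a nonnegative linear combination of terms of the form $\bigl(\sum_{i \in \Delta_k} p_{iu}\bigr)\bigl(1 - e^{-x_k \lambda_s \pi R^2}\bigr)$, each of which depends on a single variable $x_k$ and is concave in it because $1-e^{-ax_k}$ is concave for $a>0$. Since the constraint $\sum_k x_k \leq 1$ (together with implicit non-negativity) is affine, Slater's condition is trivially satisfied and KKT conditions are both necessary and sufficient for optimality.

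Next I would introduce the abbreviations $a := \lambda_s \pi R^2$ and $b_k := \sum_{u=1}^{N_u}\sum_{i \in \Delta_k} p_{iu}$, so that $\psi_k = a b_k$. I would form the Lagrangian
\begin{equation*}
\mathcal{L}(X,\mu) = \frac{1}{N_u}\sum_{k=1}^{N_c} b_k\bigl(1 - e^{-a x_k}\bigr) - \mu\Bigl(\sum_{k=1}^{N_c} x_k - 1\Bigr),
\end{equation*}
assuming an interior optimum (so the non-negativity multipliers vanish). Observing that the objective is strictly increasing in every $x_k$, the budget constraint must be active at the optimum, giving $\sum_k x_k = 1$. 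The stationarity condition $\partial\mathcal{L}/\partial x_s = 0$ then yields
\begin{equation*}
\frac{a b_s}{N_u}\, e^{-a x_s} = \mu, \qquad s = 1,\dots,N_c.
\end{equation*}

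Taking logarithms, this becomes $\log\psi_s - a x_s = \log(N_u \mu)$, i.e.\ $a x_s = \log\psi_s - \log(N_u\mu)$. Summing this equation over $s=1,\dots,N_c$ and using $\sum_s x_s = 1$ eliminates the multiplier: $a = \sum_k \log\psi_k - N_c \log(N_u\mu)$, whence $\log(N_u\mu) = \frac{1}{N_c}\bigl(\sum_k \log\psi_k - a\bigr)$. Substituting back and dividing by $a$ gives exactly the stated expression
\begin{equation*}
x_s \;=\; \frac{N_c\log\psi_s - \sum_{k=1}^{N_c}\log\psi_k + \lambda_s \pi R^2}{N_c\,\lambda_s \pi R^2}.
\end{equation*}

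I do not anticipate any serious technical obstacle: concavity and the single active constraint reduce everything to a one-parameter water-filling-style calculation. The one subtle point worth flagging is feasibility, namely whether the closed form can drop below zero when the $\psi_s$'s are very unbalanced (a cluster with extremely small aggregate popularity relative to the others); in that regime the interior KKT solution fails and the corresponding non-negativity constraint becomes active, so the formula should really be interpreted as valid under the tacit assumption that all $x_s \geq 0$. Since the lemma's statement does not address this boundary case, I would simply note the interior assumption and then present the derivation above as the proof.
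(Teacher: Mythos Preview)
Your proposal is correct and follows essentially the same route as the paper: form the Lagrangian, write the stationarity condition $\frac{a b_s}{N_u}e^{-a x_s}=\mu$, take logs, sum over $s$ using $\sum_s x_s=1$ to eliminate $\mu$, and substitute back. Your write-up is in fact more complete than the paper's, which does not explicitly justify concavity, activity of the budget constraint, or the interior/feasibility caveat you flag.
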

\begin{proof}
The Lagrangian associated with this problem is given by:
\begin{align}
L\left( X,\mu\right)&= \frac{1}{N_u} \sum_{k=1}^{N_c} \sum_{u=1}^{N_u} \left( \sum_{i \in \Delta_k}^{} p_{iu} \right) \left( 1- e^{- x_k \lambda_s\pi R^2}\right)\\
\nonumber& - \mu\left( \sum_{k=1}^{N_c} x_k -1\right) 
\end{align}
where $\mu$ is the Lagrange multiplier. Taking the gradient of the Lagrangian we have the following:\\
\begin{align}
\nonumber& e^{- x_k \lambda_s\pi R^2} = \frac{N_u\mu}{\lambda_s\pi R^2\sum_{u=1}^{N_u} \left( \sum_{i \in \Delta_s}^{} p_{iu} \right)}, \;\forall s= 1..N_c  \Longrightarrow\\
\nonumber&  \lambda_s\pi R^2 = \sum_{k=1}^{N_c} log\left(  \lambda_s\pi R^2\sum_{u=1}^{N_u} \left( \sum_{i \in \Delta_k}^{} p_{iu} \right)\right) - log\left(N_u\mu \right)N_c\\\nonumber
\end{align}
The fraction are then given by:
\begin{align}
&  x_s = \frac{log\left(  \lambda_s\pi R^2\sum_{u=1}^{N_u} \left( \sum_{i \in \Delta_s}^{} p_{iu} \right)\right) -log\left(N_u\mu \right)}{\lambda_s\pi R^2}\\
\nonumber&  x_s = \frac{N_c  log(\psi_s)-\sum_{k=1}^{N_c} log(\psi_k) +\lambda_s\pi R^2}{N_c\lambda_s\pi R^2}
\end{align}
where: $\psi_s= \lambda_s\pi R^2\sum_{u=1}^{N_u} \left( \sum_{i \in \Delta_s}^{} p_{iu} \right) $
\end{proof}
We can clearly see from the expression obtained in Lemma 1 that the optimized system will tend to allocate more SBS to the cluster containing the most popular files among all users while maintaining a certain  diversity in the stored content.    
\section{Numerical Results AND Discussion}
We start by validating our analytical expressions and then investigate the impact of the different system parameters on the cache hit probability. We simulate the two PPP process of the users and the SBSs  over an area of $36\; Km^2$.
We generate the popularity profiles of the users randomly with a bias for a given set of files. Each cluster being characterized by a set of popular files. The users are allocated to the clusters randomly. In order to run the Clustering algorithm we only need a range over which the search of the number of clusters is made. In our simulations we take $\left[{N_c}_{min}...{N_c}_{max} \right] = \left[5...50 \right]$
\begin{figure}[h!]
	\centering
	\includegraphics[width=8cm,height=6cm]{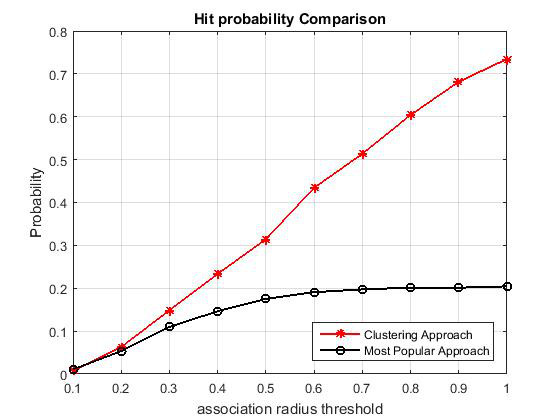}
	\caption{Hit Probability comparison}
\end{figure}
The Figure 2  shows the gain in term of cache hit probability when comparing the proposed clustering method and the scheme in which the most popular files in all the network are cached. We notice a considerable improvement due mainly to a better understanding of the user request pattern. At  a communication radius of $0.8\;Km$, the clustering approach triples the cache hit probability compared with the case where no clustering is used.
 \begin{figure}[h!]
 	\centering
 	\includegraphics[width=8cm,height=6cm]{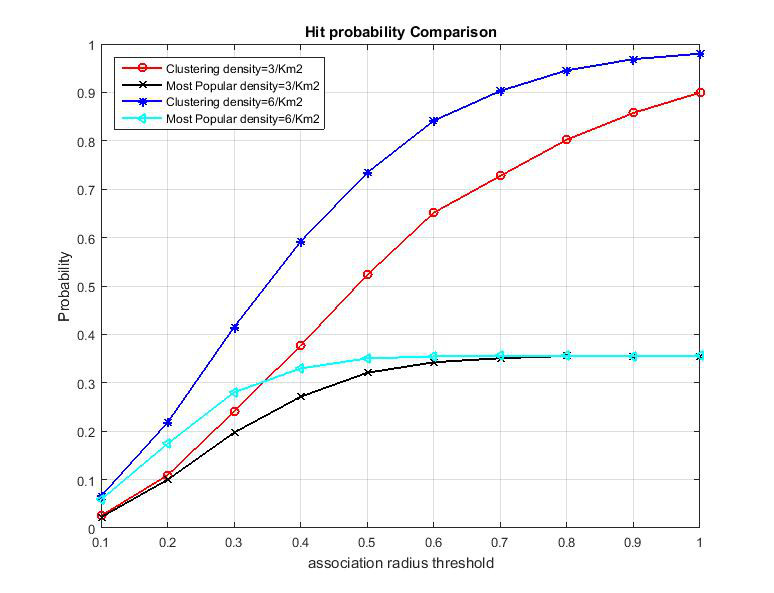}
 	\label{Average interference}
 	\caption{Hit Probability with different SBS densities }
 \end{figure}  
 Figure 3 shows the evolution of the cache hit probability with the SBS density. We notice that the hit probability for the scheme without clustering saturates at a certain value mainly because the same set of files is cached in all the SBSs which is clearly not an efficient approach. The hit probability for the clustering method increases mainly due to the fact that different files are cached in the SBS. Increasing the SBS  density brings these files closer to the users, hence improving the hit probability. In this setting, increasing the SBS density at a communication radius of $0.7\;Km$ improves the hit probability by $26 \%$ for the clustering approach while at the same radius, no improvement  was noticed for the classical approach. Figure 4 shows the impact of the cache size. Increasing the cache size is obviously a practical way to increase the system performances without the need to activate more costly SBSs. Figure 5 shows the performance of the AIC based model selection. It represents the average AIC normalized by the number of samples.  The lowest AIC value gives the model in which we minimize the loss  of information. The negative values of the AIC are due to a negative bias that characterize the AIC  with a small sample number.
\begin{figure}[h!]
	\centering
	\includegraphics[width=8cm,height=6cm]{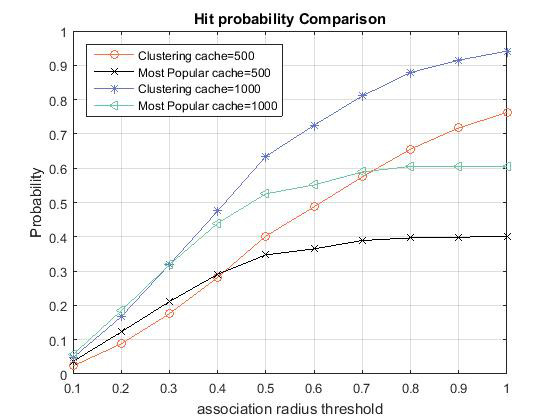}
	\label{Average interference}
	\caption{Hit Probability with different cache sizes }
\end{figure}
\begin{figure}[h!]
	\centering
	\includegraphics[width=8cm,height=6cm]{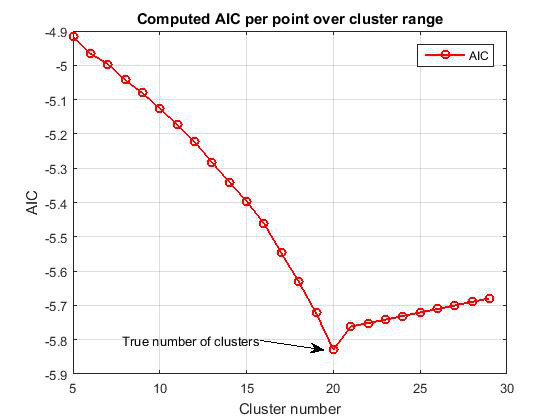}
	\label{Akaike information criterion}
	\caption{Akaike information criterion}
\end{figure}

\section{Conclusion AND Future Work}
We studied a cache enabled small cell network  with limited storage capability. We used a clustering approach in order to efficiently study the files popularity in the network and exploit the correlation between users demand. We develop an algorithm that enables an efficient clustering of users. We derived the hit probability when the SBS store the most popular content of the different user clusters and optimized SBS assignment. Finally, we performed numerical analysis which shows that the proposed algorithm outperforms the scheme in which no clustering is used.

% trigger a \newpage just before the given reference
% number - used to balance the columns on the last page
% adjust value as needed - may need to be readjusted if
% the document is modified later
%\IEEEtriggeratref{8}
% The "triggered" command can be changed if desired:
%\IEEEtriggercmd{\enlargethispage{-5in}}.

% (used to reserve space for the reference number labels box)

% that's all folks
\end{document}